\newtheorem{definition}{Definition}[section]
\newtheorem{lemma}{Lemma}
\newtheorem{theorem}{Theorem}
\title{Infinite All-Layers Simple Foldability}
\date{}
\author{
  Hugo A. Akitaya\thanks{Tufts University, \protect\url{hugo.alves_akitaya@tufts.edu}}
\and
  Cordelia Avery\thanks{CSAIL, Massachusetts Institute of Technology, \protect\url{{cavery,jbergero,edemaine,jkopin,jasonku}@mit.edu}}
\and
  Joseph Bergeron\footnotemark[2]
\and
  Erik D. Demaine\footnotemark[2]
\and
  Justin Kopinsky\footnotemark[2]
\and
  Jason Ku\footnotemark[2]
}
\begin{document}
\thispagestyle{empty}
\maketitle

\begin{abstract}
We study the problem of deciding whether a crease pattern can be folded by simple folds (folding along one line at a time) under the \emph{infinite all-layers} model introduced by~\cite{SimpleFolds_JCDCGGG2016}, in which each simple fold is defined by an infinite line and must fold all layers of paper that intersect this line.
This model is motivated by folding in manufacturing such as sheet-metal bending.
We improve on~\cite{arkin04} by giving a deterministic $O(n)$-time algorithm to decide simple foldability of 1D crease patterns in the all-layers model.
Then we extend this 1D result to 2D, showing that simple foldability in this model can be decided in linear time for unassigned axis-aligned orthogonal crease patterns on axis-aligned 2D orthogonal paper.
On the other hand, we show that simple foldability is strongly NP-complete if a subset of the creases have a mountain--valley assignment, even for an axis-aligned rectangle of paper.
\end{abstract}

\section{Introduction}

A classic problem in computational origami (indeed, the topic of the very first paper in the field \cite{bern1996})
is \emph{flat foldability}: given a crease pattern (planar straight-line graph) on a polygonal piece of paper $P$, can $P$ be folded flat isometrically without self-intersection while creasing at all creases (edges) in the crease pattern?
The decision problem is NP-hard~\cite{bern1996}, even if the paper $P$ is an axis-aligned rectangle and the creases are at multiples of $45^\circ$ \cite{Akitaya2016}.
But even when a crease pattern does fold flat, the motion to achieve that folding can be complicated \cite{PaperReachability_CCCG2001}, making the process impractical in some physical settings.

Motivated by practical folding processes in manufacturing such as sheet-metal bending,
Arkin et al.~\cite{arkin04} introduced the idea of \emph{simple foldability}---flat foldability by a sequence of simple folds.
Informally, a \emph{simple fold} is defined by a line segment and rotates a portion of the paper around this line by $\pm 180^\circ$, while avoiding self-intersection; refer to Section~\ref{sec:def} for formal definitions.
The problem also makes sense with 1D paper,
where $P$ is a line segment and creases are defined by points in~$P$.
Arkin et al.\ defined several models of simple folds and, for many models, showed that simple foldability is polynomial for 1D paper, polynomial for rectangular paper with axis-aligned creases, weakly NP-complete for rectangular paper with creases at multiples of $45^\circ$, and weakly NP-complete for orthogonal paper with axis-aligned creases.
In particular, they developed an algorithm to determine simple foldability of a 1D mountain--valley pattern in $O(n\log n)$ deterministic time and $O(n)$ randomized time in the \emph{all-layers} model, requiring that a simple fold through one crease, also folds through all layers overlapping that crease.

\textbf{Algorithmic results.}
In Section~\ref{sec:1D}, we improve on~\cite{arkin04} by giving a deterministic $O(n)$-time algorithm to decide simple foldability of 1D crease patterns in the all-layers model.
This result removes the logarithmic factor from the best previous deterministic
algorithm, or equivalently, removes the randomization from the best previous
$O(n)$ algorithm.

\medskip

In a recent followup to Arkin et al.~\cite{arkin04},
Akitaya et al.~\cite{SimpleFolds_JCDCGGG2016} extended
the list of simple folding models,
and for many models, showed \emph{strong} NP-hardness for 2D paper.
In particular, they introduced the \emph{infinite all-layers} model of simple folds, studied here, which requires that each simple fold is defined by an infinite line, and that all layers of paper intersecting this line must be folded.
This model is probably the most practical of all simple folding models; for example, Balkcom's robotic folding system \cite{balkcom2008} is restricted to this model.
For an axis-aligned rectangle paper with axis-aligned creases (and for 1D paper), infinite and non-infinite simple fold models are equivalent \cite{SimpleFolds_JCDCGGG2016}.

\textbf{Hardness results.}
In this paper, we study the complexity of one of the few remaining open
problems in this area \cite{SimpleFolds_JCDCGGG2016}: infinite all-layers
simple foldability on axis-aligned orthogonal paper with axis-aligned creases
(henceforth referred to as \emph{orthogonal crease patterns}).
In Section~\ref{sec:Unassigned}, we prove that, when the creases are unassigned (can freely fold mountain or valley), this problem can be solved in polynomial (indeed, linear) time. 
On the other hand, we prove in Section~\ref{sec:PartAssign} that, when the creases are \emph{partially assigned} (some creases must fold mountain, some creases must fold valley, while others can freely fold mountain or valley), the problem becomes strongly NP-complete, even for an axis-aligned rectangle of paper (and thus also for the regular all-layers simple fold model \cite{SimpleFolds_JCDCGGG2016}).
Remaining open problems are summarized in Section~\ref{sec:Open}.


\section{Definitions}
\label{sec:def}
We base our definitions on those from \cite{SimpleFolds_JCDCGGG2016}, specializing to the infinite all-layers model.
Define the \emph{folding plane} $\mathbb{P}$ be a copy of $\mathbb{R}^2$,
and define a \emph{piece of paper} $P$ to be a connected polygon in $\mathbb{P}$ possibly with holes.
A \emph{crease pattern} is defined by $(P,\Sigma)$ where $\Sigma$ is the edge set of a planar straight-line graph embedded in~$P$.
Members of $\Sigma$ are interior disjoint line segments contained in $P$ called \emph{creases}.
A \emph{facet} is a connected open set in $P\setminus \Sigma$.
A crease pattern $(P,\Sigma)$  can have an \emph{assignment} $\alpha:\Sigma\rightarrow \{M,V\}$, where $M=1$ (mountain) and $V=-1$ (valley).
A function $\alpha:\Sigma^-\rightarrow \{M,V\}, \Sigma^-\subset \Sigma$ is called a \emph{partial assignment} of $(P,\Sigma)$.

We define an \emph{infinite all-layers simple fold} to be the operation that takes a crease pattern $(P,\Sigma)$, possibly with a (partial) assignment, and a directed infinite line $\ell$ in $\mathbb{P}$, and returns a simpler crease pattern $(P',\Sigma')$ (``using up'' some creases, i.e., $|\Sigma'|<|\Sigma|$) or reports ``illegal'', as defined below.
Let $P_l$ (resp., $P_r$) be the subset of $P$ in the left (resp., right) open half-plane defined by $\ell$.
Let $\Sigma_l$ (resp., $\Sigma_r$) be the subset of $\Sigma$ contained in $P_l$ (resp., $P_r$).
Reflect $P_r$ and $\Sigma_r$ about $\ell$, obtaining $P_r'$ and $\Sigma_r'$.
If a crease $c\in\Sigma_r$ had an assignment, define $\alpha(c')=-\alpha(c)$ where $c'$ is the reflection of $c$.
Two creases \emph{overlap} if their intersection is a non-degenerate line segment.

\begin{definition}[Legal folds]
\label{def:legal}
An all-layers simple fold through $\ell$ is \emph{legal} if it satisfies the following conditions:
\begin{enumerate}[label=(\arabic*)]
\item \label{cond:only-on-crease} Line $\ell$ does not contain any point of a facet of $P$, i.e., $\ell\cap P\subseteq \Sigma\cup \partial P$, where $\partial P$ represents the boundary of $P$;
\item \label{cond:creases-to-creases} No crease in $\Sigma_r'$ (resp., $\Sigma_l$) contains any point of a facet in $P_l$ (resp., $P_r'$).

If a (partial) assignment is given we additionally require the following conditions:
\item \label{cond:mountains-to-mountains} For every pair of assigned creases $c_1$ and $c_2$ in $\ell$,  $\alpha(c_1)=\alpha(c_2)$.
\item \label{cond:consistent-assign} For every pair of assigned creases $(c_1, c_2)$ that are contained in the transitive closure of the overlap relation,  $\alpha(c_1)=\alpha(c_2)$.
\end{enumerate}
Otherwise we call the fold \emph{illegal}.
\end{definition}

We define the crease pattern $(P',\Sigma')$ obtained from a legal fold as follows.
Let $P'$ be the closure of $P_l\cup P_r'$.
Set $\Sigma'$ initially as $\Sigma_l\cup\Sigma_r'$ and successively  merge all pairs of overlapping creases $c_1$ and $c_2$.
By (2), every interior-intersecting pair of creases overlap.
If either $c_1$ or $c_2$ is assigned, define the assignment of the new crease as $\alpha(c_1)$ or $\alpha(c_2)$.
By (4), all creases can be consistently assigned as above regardless of the order of the merges.

In the one-dimensional case, we define $\mathbb{R}^1$ as the folding plane, $P$ as a line segment and $\Sigma$ as a set of points in $P$.
A fold is defined by a point $\ell$ in $P$, which, by (1), must be also a crease.
(3) is vacuously true in the 1D setting.
All the other definitions follow by the definitions of the two-dimensional problem.

Finally we can define the problem at hand.
The \emph{infinite all-layers simple foldability problem} asks whether, given a crease pattern $(P,\Sigma)$, there is a sequence of crease patterns $S=((P_1,\Sigma_1), \ldots, (P_m,\Sigma_m))$
such that $(P,\Sigma)=(P_1,\Sigma_1)$, $\Sigma_m=\emptyset$, and $(P_i,\Sigma_i)$ is the result of a legal infinite all-layers simple fold of $(P_{i-1},\Sigma_{i-1})$ for $i\in\{2,\ldots,k\}$. 

\paragraph{Difference between definitions.}
Akitaya et al.~\cite{SimpleFolds_JCDCGGG2016} defines a flat folded state as an \emph{isometry} of $P$ --- an isometric embedding of $P$ into $\mathbb{P}$ that preserves connectivity between facets and creases which imposes the non-stretching restrictions on the paper --- together with a non-crossing \emph{layer ordering} --- a binary relation between overlapping facets in the isometry describing the above/below relationship between them.
Their ``non-crossing" definition captures the fact that the paper cannot penetrate itself.
A simple fold is an operation that takes a flat folded state and returns another, modeling the rotation of a portion of the paper through an axis while preserving the non-stretching and non-self-penetrating properties.
$P$ can also be defined as an orientable surface with top and bottom sides for which a crease (which can be assigned \emph{mountain} or, resp., \emph{valley}) is the fold through it which brings together the bottom (resp., top) sides of adjacent facets.
However, because this paper focuses on only the infinite all-layers model, we propose a simplified definition of simple folds, without defining a folded state.
Our definition is equivalent in the sense that it preserves simple foldability of input crease patterns and
the sequence $S$ of crease patterns described above can be easily converted into a sequence of flat folded states by defining the isometry based on the sequence of reflections.   The layer ordering can be recovered from the assignment of the creases on $\ell$ (choosing an arbitrary assignment if all creases are unassigned).
This conversion is possible because, in the infinite all-layers model, if two facets overlap in the isometry, they can be considered as ``glued together'' because no simple folds in this model can separate them (in particular, unfolding is not allowed).

\section{1D Crease Patterns}
\label{sec:1D}

In this section, we consider only assigned and unassigned crease patterns.
We build on ideas from~\cite{arkin04}, representing an instance with $n$ creases as a string $S$ of length $2n+1$, denoting by $S[i], i\in\{1,\ldots,2n+1\}$ the $i$th symbol in $S$. 
For even $i$, $S[i]$ represents the assignment of the $(i/2)$nd crease or $0$ if unassigned.
For odd $i$, $S[i]$ represents the distance between the $\lfloor i/2\rfloor$th and the $(\lfloor i/2\rfloor+1)$th crease for odd $i$, considering the edges of the paper the $0$th and $(n+1)$st crease.
Let the \emph{complement} of the symbols in the string $S$ be defined as $\text{comp}(S[i])=S[i]$ (resp., $\text{comp}(S[i])=-S[i]$) if $i$ is odd (resp., even). This definition is motivated by the observation that, if we consider what happens to the section of paper to the right of crease $i$ when $i$ is folded, we see that the string $S[i+1 \dots 2n+1]$ is converted to $\text{comp}(S[2n+1 \dots i+1])$.  
We show an algorithm that finds the leftmost legal fold defined by the $k$th crease, if one exists, in time $O(k)$.
We then show that successive applications of this algorithm will find a solution for the problem in $O(n)$ time.

The following algorithm finds the smallest prefix of length $2k+1\le 2n+1$ of the input $S$ such that: 
\begin{enumerate*}[label=(\roman*)]
\item \label{cond:boundary} $S[1]\le S[2k-1]$; 
\item \label{cond:assign} for $i\in[1,k-2]$, then $S[k-i]=\text{comp}(S[k+i])$.
\end{enumerate*} 
Importantly, if conditions~\ref{cond:boundary} and~\ref{cond:assign} are satisfied, then a fold through crease $k$ is legal. Note, however, that not every legal fold is necessarily such a satisfying prefix; some may be a satisfying prefix of the \emph{reversed} string $S^R$ (or equivalently, a \emph{suffix} of $S$). Fortunately, it is easy to handle both cases simultaneously as we will demonstrate in Theorem~\ref{thm:1d} below.

This algorithm is a minor variation of the algorithm in~\cite{Palindrome} that we show here for completeness. 
We say that a position $f>k$ fails a crease $S[k]$ if $S[k-(f-k)]\neq\text{comp}(S[f])$, i.e., $(f-k)$ violates (ii) for crease $S[k]$.
The array $F$ stores the \emph{failure number} of creases. 
The failure number $F[k]$ of a crease $S[k]$ is defined as the minimum $f>k$ that fails  $S[k]$.

\begin{algorithm}[H]\DontPrintSemicolon
	\LinesNumbered
	\SetKw{KwGoTo}{go to}
\textbf{Algorithm~$A$}\;
\KwData{string $S$ of length $2n+1$.}
		$k\leftarrow 2$\;
		\For{$it\leftarrow 3$ \KwTo $2n+1$}{
			$i\leftarrow it-k$\;\label{start}
			\If{$i<1$}{\Continue}
			\tcc{Check condition (i)}
			\If{$k-i=1$  {\bf and} $S[1]\le(S[2k+1])$}{\KwRet{$k$}}
			\tcc{Check condition (ii)}
			\If{$S[k-i]\neq \text{comp}(S[k+i])$}{
				$F[k]\leftarrow i$\;\label{failure}
				$k'\leftarrow k+2$\;
			\While {$k'<it$}{
				\If{$it-k'\neq F[k-(k'-k)]$}{$F[k']\leftarrow\min\{it-k',F[k-(k'-k)]\}$}
				\Else{
					$k\leftarrow k'$\;
				\KwGoTo \ref{start}\;}
			
			$k'\leftarrow k'+2$\;
	}
$k\leftarrow k'$\;
		}
		}
		\Return $0$.
\end{algorithm}

\begin{figure}[h!]
	\centering
	\includegraphics[width=\columnwidth]{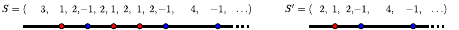}
	\caption{An instance $S$ and the result $S'$ obtained by folding through $S[6]$. A fold though $S[2]$ violates~\ref{cond:boundary}, and attempting to fold through $S[4]$ is failed by crease $S[6]$ and, hence, violates~\ref{cond:assign}.}
	\label{fig:1d}
\end{figure}

\begin{lemma}
	\label{lem:1D-fold}
	Algorithm $A$ either finds the leftmost legal all-layers simple fold at crease $S[k], 1<k\le n+1$ in $O(k)$ time  or returns $0$ if no such fold exists in $O(n)$ time.
\end{lemma}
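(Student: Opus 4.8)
The plan is to recognize Algorithm $A$ as a variant of the classical linear-time failure-function / palindrome computation (à la KMP and Manacher), adapted so that the relevant symmetry is the \emph{complemented} palindrome $S[k-i]=\text{comp}(S[k+i])$. I would first restate legality purely in terms of the failure numbers. Because the condition-(i) test precedes the condition-(ii) test, whenever the algorithm enters the failure branch at offset $i=it-k$ we have $k-i\ge 1$, i.e. $i\le k-1$. A right-over-left fold at the even position $k$ is legal exactly when the complemented palindrome centered at $k$ survives every offset $i\in[1,k-2]$ and the boundary inequality $S[1]\le S[2k-1]$ holds; equivalently, when $F[k]\ge k-1$ together with condition~(i). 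Thus the lemma reduces to showing that Algorithm $A$ returns the smallest even position $k$ with this property (and $0$ if none exists), within the stated time.

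Next I would prove, by induction on the outer counter $it$, the invariant that (a) the active center $k$ is non-decreasing and its complemented palindrome has been genuinely verified out to radius $i-1$ at the moment offset $i$ is tested (offsets skipped by a promotion being guaranteed to pass by symmetry), and (b) every value written to $F$ equals the true failure number. The substantive step is the transfer rule $F[k']=\min\{it-k',\,F[2k-k']\}$: for a position $k'$ strictly inside the palindrome just closed at center $k$, its mirror $2k-k'$ lies in the already-processed region, and complemented-palindrome symmetry about $k$ forces the local neighborhoods of $k'$ and $2k-k'$ to agree up to the point where the outer palindrome ends. When the mirrored failure $F[2k-k']$ falls strictly inside that window it transfers verbatim; when it reaches the current boundary ($it-k'=F[2k-k']$) the transfer is inconclusive, so the algorithm promotes $k'$ to the new center and resumes extension from offset $it-k'$, which the symmetry argument shows is precisely the first unverified offset. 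The only new ingredient relative to the standard proof is that $\text{comp}$ is an involution compatible with reflection, so the symmetry argument carries over unchanged.

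For the leftmost guarantee I would show that every even position the algorithm passes over is provably non-legal. Each such $k'=k+d$ with $d\ge 1$ is assigned $F[k']\le it-k'=i-d$; since the failure branch has $i\le k-1$ and $k'<it$ gives $d<i$, we obtain $1\le F[k']\le i-d\le k-1-d<k+d-1=k'-1$. Hence $F[k']<k'-1$, so the complemented palindrome at $k'$ breaks strictly before reaching the boundary and $k'$ cannot satisfy the reformulated legality condition. Since $k$ only increases and the while loop steps through consecutive even positions, no even position is skipped without being certified non-legal, and the loop halts exactly at the first position whose failure is not yet forced by the mirror. Combining this with the reformulation, the first time the algorithm reaches the condition-(i) test with $k-i=1$ it is at the smallest even $k$ with $F[k]\ge k-1$ and $S[1]\le S[2k-1]$, so returning $k$ there yields the leftmost legal fold; if $it$ exhausts $[3,2n+1]$ no such $k$ exists and the algorithm correctly returns $0$.

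Finally, for the running time I would use the potential $\Phi=it+k$. The outer counter $it$ never decreases, each execution of the \emph{go to} statement only increases $k$, and both quantities are bounded by $2n+1$; every unit of work (one comparison, one $F$-assignment, one while step) is charged to an increment of $it$ or of $k$, giving $O(n)$ total work, hence $O(n)$ when the algorithm returns $0$. When a legal fold is found at position $k$, the return fires when $k-i=1$, i.e. at $it=2k-1$ (which also forces $k\le n+1$), so only $\Phi=O(k)$ has accrued and the running time is $O(k)$, as claimed. I expect the main obstacle to be the correctness of the transfer rule together with the leftmost certification—simultaneously proving that the mirror never hides a genuinely legal fold center and that promotion to a new center resumes at the correct offset; once $k$ is shown to be monotone, the amortized time bound is then a routine two-pointer argument.
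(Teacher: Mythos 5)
Your proposal is correct and takes essentially the same route as the paper's proof: both view Algorithm $A$ as a complemented-palindrome (Manacher-style) failure-number computation, justify the mirror transfer rule $F[k']=\min\{it-k',\,F[2k-k']\}$ with promotion of $k'$ to the new center exactly in the inconclusive boundary case $F[2k-k']=it-k'$, and derive the $O(k)$/$O(n)$ bounds from the monotonicity of $it$ and $k$ (the paper notes $it=O(k)$ at return, where you phrase it via the potential $it+k$). Your explicit arithmetic showing every bypassed position receives $F[k']<k'-1$ and is therefore certified illegal is just a more detailed rendering of the paper's invariant that no legal fold exists to the left of the current center, so the two arguments coincide.
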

\begin{proof}
	The Lemma follows from~\cite{Palindrome} and~\cite{arkin04}.
	For completeness, we give here a summary of the arguments.
	Conditions~\ref{cond:boundary} and~\ref{cond:assign} for odd (resp., even) indices $i$ imply that a fold at crease $S[k]$ satisfies condition~\ref{cond:creases-to-creases} (resp.,~\ref{cond:mountains-to-mountains}) from Definition~\ref{def:legal}.
	Hence, if a value $k\neq 0$ is returned, it defines a legal fold.
	Similarly, every legal fold at $S[k]$ satisfies conditions~\ref{cond:boundary} and~\ref{cond:assign}.
	The integer variables $k$ and $it$ only increase and are upper bounded by $2n+1$.
	When a value of $k$ is returned, $it=O(k)$ and, thus, the algorithm runs in $O(k)$ (resp., $O(n)$) time for positive (resp., negative) instances.
	The algorithm maintains the invariant that every crease to the left of $k$ have been assigned a failure number, i.e., there exists no legal fold to the left of $k$.
	
	The algorithm reaches line~\ref{failure} if position $it$ fails $S[k]$.
	For all creases $S[k']$ where $k<k'<it$, we can determine in constant time if a position to the left of $it$ fails $S[k']$ as follows.
	Let $S[k'']$ be the crease symmetric to $S[k']$ about $S[k]$, i.e., $k''=k-(k'-k)$.
	If $F[k'']<it-k'$, then by~\ref{cond:assign} with respect to $S[k]$ we have $S[k''+F[k'']]=\text{comp}(S[k'-F[k'']])$ and $S[k''-F[k'']]=\text{comp}(S[k'+F[k'']])$.
	By the definition of failure number, $F[k'']$ is the smallest integer such that $$\text{comp}(S[k'-F[k'']])\neq\text{comp}(\text{comp}(S[k'+F[k'']]))=S[k'+F[k'']].$$
	Notice that $a=\text{comp}(\text{comp}(a))$ and $a\neq \text{comp}(b)\rightarrow \text{comp}(a)\neq b$ for all $a$ and $b$.
	Hence, $F[k']=F[k'']$ and a fold through $S[k']$ is illegal.
	If $F[k'']>it-k'$, then $S[k''+(it-k')]=\text{comp}(S[k''-(it-k')])=S[k'-(it-k')]$.
	Because the position $it$ fails $S[k]$, $$S[k'+(it-k')]\neq \text{comp}(S[k''-(it-k')])=\text{comp}(S[k'-(it-k')]).$$
	Therefore, $F[k']=it-k'$ and a fold through $S[k']$ is illegal.
	Then, the algorithm finds the leftmost crease $k$ for which no position between $k$ and $it$ fails $S[k]$.
	Therefore, the algorithm returns the leftmost legal crease on the left half of $S$.
\end{proof}

\begin{theorem}
	\label{thm:1d}
	All-layers simple foldability of a 1D assigned or unassigned crease patterns can be decided in deterministic linear time.
\end{theorem}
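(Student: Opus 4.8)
The plan is to show that a greedy procedure—repeatedly performing the legal fold closest to either end of the string, until no crease remains—decides foldability, and that running Algorithm~$A$ from both ends makes the total cost linear. Correctness rests on the structural fact that an \emph{extreme} legal fold is safe: if $(P,\Sigma)$ is foldable, it stays foldable after folding at the leftmost legal crease, and, by the left--right symmetry of the model, also after folding at the rightmost legal crease. By the invariant maintained in Lemma~\ref{lem:1D-fold}, no legal fold lies to the left of the leftmost legal crease~$k$, so in any folding sequence the creases strictly left of~$k$ must be absorbed by a fold at some crease $\ge k$ rather than folded first; an exchange argument then reorders a hypothetical solution to perform the fold at~$k$ first. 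This is exactly the greedy property of the all-layers model from~\cite{arkin04}, which I would invoke, checking that it applies verbatim to the fully assigned and fully unassigned cases since condition~\ref{cond:assign} together with $\text{comp}$ encodes precisely conditions~\ref{cond:creases-to-creases} and~\ref{cond:mountains-to-mountains} of Definition~\ref{def:legal}. It follows that the procedure empties the crease pattern if and only if the input is foldable, and reports failure exactly when neither end admits a legal fold while creases remain.

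Next I would set up the two-ended search. Algorithm~$A$ applied to~$S$ returns the leftmost legal fold, whose crease index satisfies $1<k\le n+1$ by Lemma~\ref{lem:1D-fold}; it therefore folds the shorter (left) side onto the longer one, and conditions~\ref{cond:boundary} and~\ref{cond:assign} guarantee that the reflected prefix overlaps consistently and is absorbed, leaving the suffix $S[k+1\dots 2n+1]$ and deleting $\Theta(k)$ symbols. A legal fold whose shorter side is on the \emph{right} is not a satisfying prefix of~$S$ but is one of the reversed string~$S^R$, so running Algorithm~$A$ on~$S^R$ finds the rightmost legal fold by the same reasoning. In either case a fold found at distance~$k$ from its end costs $O(k)$ to locate and removes $\Theta(k)$ symbols, and applying it is just advancing a pointer, so no extra work is needed to rewrite the string.

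Finally I would argue the linear bound. To avoid spending $O(n)$ at a step where one end is far from every fold, I would interleave the two runs of Algorithm~$A$ one step at a time and stop the instant either returns; if the nearest fold is at distance $c=\min(k_{\mathrm{left}},k_{\mathrm{right}})$ from its end, the interleaved search halts in $O(c)$ steps and folding the corresponding shorter side removes $\Theta(c)$ symbols. Charging each fold's $O(c)$ search cost to the symbols it deletes, and noting that the deleted blocks are disjoint and total at most $2n+1$, the costs telescope to $O(n)$. The only nonprogressing step is the last, in which both interleaved runs return~$0$; this occurs once, costs $O(n)$, and certifies non-foldability, which is how the $O(n)$ worst case of returning~$0$ in Lemma~\ref{lem:1D-fold} is absorbed without harming the amortized bound.

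The step I expect to be the main obstacle is making the safety/exchange argument fully rigorous: it requires tracking which side each fold turns over and how folds performed to the right of~$k$ interact with the creases left behind, which is why I would lean on the established greedy property of~\cite{arkin04} rather than reprove it from scratch, contributing the new deterministic $O(n)$ implementation through the interleaved two-ended search and the telescoping charge.
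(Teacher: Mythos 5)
Your proposal is correct and takes essentially the same approach as the paper: run Algorithm~$A$ on $S$ and $S^R$ in parallel (interleaved), perform the first fold either run reports, charge its $O(k)$ search cost to the $\Theta(k)$ symbols the fold deletes so the costs telescope to $O(n)$ with one final $O(n)$ failing step, and delegate safety to Lemma~4.1 of~\cite{arkin04}. The only difference is cosmetic: the paper cites that lemma directly, which says \emph{any} legal fold preserves all-layers foldability for assigned or unassigned 1D patterns, so your extreme-fold exchange argument is not actually needed.
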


\begin{proof}
	We provide a constructive proof.
	Run Algorithm $A$ on inputs $S$ and $S^R$, a reversed copy of $S$, in parallel.
	If both return $0$, then by Lemma~\ref{lem:1D-fold}, $S$ is not all-layers simple foldable.
	Else, let $k$ be the first value returned.
	Without loss of generality, $k$ was returned by Algorithm $A$ on $S$.
	Using our definitions for folding at $S[k]$, the resulting crease pattern $S'$ is represented by the substring of $S$ from $k+1$ to $2n+1$.
	Hence we generated a smaller subproblem of size $2n-k$ in $O(k)$ time.
	By Lemma 4.1 of~\cite{arkin04}, if any legal fold can be done in $S$, the resulting crease pattern $S'$ is all-layers simple foldable if and only if $S$ also is.
	Then, by successively applying the above algorithm in the resulting smaller subproblems we arrive at a crease pattern $(P_m,\emptyset)$ in $O(n)$ time if and only if $S$ is all-layers simple foldable.
\end{proof}


\section{Unassigned Orthogonal 2D Crease Patterns}
\label{sec:Unassigned}

Let $(P,\Sigma)$ be the input for our problem, where $P$ is an orthogonal polygon and $\Sigma$ contains axis-aligned creases.
We first show a necessary condition that any crease pattern of a solution $((P_1,\Sigma_1), \ldots, (P_k,\emptyset))$ must satisfy.

\begin{lemma}
\label{lem:inf-extention}
Given a crease pattern $(P,\Sigma)$ with orthogonal paper $P$ and orthogonal creases $\Sigma$, if a line $\ell^*$ in $\mathbb{P}$ contains a crease in $\Sigma$, then every point in $P\cap \ell^*$ must be on a crease in $\Sigma$ or else $(P,\Sigma)$ is not infinite all-layers simple foldable.
\end{lemma}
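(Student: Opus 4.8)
The plan is to prove the contrapositive: assuming some point of $P \cap \ell^*$ avoids every crease, I will exhibit an obstruction that survives every legal fold and prevents the crease(s) on $\ell^*$ from ever being used up. Without loss of generality take $\ell^*$ vertical. First I would fix a witness point $p \in P \cap \ell^*$ that is not on any crease and, moreover, lies in the \emph{interior} of $P$; this reduction is safe because the creases on $\ell^*$ form a relatively closed subset of the segment(s) $P \cap \ell^*$, so if the covering is incomplete the uncovered part has nonempty relative interior and hence meets $\mathrm{int}(P)$ (the only exceptional configurations are isolated boundary tangencies, which I would dispatch separately). A preliminary observation I would establish by induction is that every fold line is axis-aligned: a legal fold must consume a crease (since $|\Sigma'| < |\Sigma|$), the line through an axis-aligned crease is axis-aligned, and reflecting and merging axis-aligned creases keeps them axis-aligned.

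The heart of the argument is an invariant maintained along any hypothetical legal sequence $(P_0,\Sigma_0),(P_1,\Sigma_1),\dots$: the image $\ell_t$ of $\ell^*$ is a single vertical line carrying the image of the crease $c$, while the image $p_t$ of the witness lies on $\ell_t$, stays in $\mathrm{int}(P_t)$, and lies on no crease of $\Sigma_t$. Orthogonality is exactly what makes the geometric half of this invariant go through: a fold line either is parallel to $\ell_t$ (reflecting the vertical line $\ell_t$ to another vertical line) or perpendicular to it (reflecting $\ell_t$ onto itself); in either case $\ell^*$ never \emph{bends}, whereas an oblique fold would break $\ell_t$ into two rays and destroy the notion of a single bad line. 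To see that $p_t$ stays off all creases, I would argue that a crease through $p_{t+1}$ would have to come either from $p_t$'s own side (impossible, since $p$ was never on a crease and reflection preserves this) or as a reflected crease from the opposite side landing on the facet point $p_t$, which is forbidden precisely by legality condition~\ref{cond:creases-to-creases}. Interiority is preserved because a fold line passing through the interior facet point $p_t$ would already violate condition~\ref{cond:only-on-crease}, so $p_t$ always lies strictly to one side of each fold and retains an open neighborhood inside the paper.

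With the invariant in hand the conclusion is immediate: to reach $\Sigma_m = \emptyset$ the crease $c$ must eventually be used up, and a crease lying on the vertical line $\ell_t$ can only be consumed by a fold whose line is $\ell_t$ itself. But $\ell_t$ always contains the interior facet point $p_t$, so condition~\ref{cond:only-on-crease} makes any such fold illegal. Hence no legal sequence can ever use up $c$, and $(P,\Sigma)$ is not infinite all-layers simple foldable. It is worth noting that the \emph{infinite} nature of the model is essential here: because the fold is defined by a full line rather than a segment, condition~\ref{cond:only-on-crease} constrains the entire cross-section $\ell^* \cap P$ and cannot avoid the far-away uncovered point $p$.

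I expect the main obstacle to be the invariant-maintenance step, and within it the bookkeeping that the witness point $p_t$ never becomes covered: this is the one place where legality condition~\ref{cond:creases-to-creases} (reflected layers cannot fall on a facet of the other side) must be invoked carefully, keeping track of which side of the current fold line $p_t$ lies on. The second delicate point is confirming that orthogonality forces $\ell_t$ to remain a single straight line under every admissible fold; this is what ties the lemma to the orthogonal setting and would fail for general crease patterns.
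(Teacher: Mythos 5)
Your proof is correct and follows essentially the same route as the paper's: both arguments track the line containing a crease together with an uncovered facet point through any legal fold sequence, using condition~\ref{cond:creases-to-creases} to show the facet point stays off creases (splitting into the parallel and perpendicular cases, which is where orthogonality enters), and condition~\ref{cond:only-on-crease} to conclude that this line can never serve as a fold axis, so the pattern cannot be fully folded. Your version is somewhat more explicit than the paper's---notably in justifying that all fold axes are axis-aligned and in keeping the witness point in the interior of the paper (the paper handles the boundary issue by silently restricting to $P\cap\ell^*\setminus\partial P$)---but the underlying invariant-plus-induction argument is the same.
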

\begin{proof}
Recall that, by condition~\ref{cond:only-on-crease}, a simple fold through a line $\ell$ does not contain any point of a facet.
By contradiction, let $\ell^*$ be a line containing a crease such that $P\cap \ell^*\setminus\partial P$ is not covered by creases in $\Sigma$.
We prove by induction that $(P,\Sigma)$ is not simple foldable.
Trivially, $\ell^*$ cannot be an axis of a simple fold by~\ref{cond:only-on-crease}.
The base case is when all creases in $\Sigma\neq\emptyset$ are on an axis that define illegal simple folds, hence, no legal fold is possible.
Else, a legal fold through some axis $\ell$ is possible.
By~\ref{cond:creases-to-creases}, a simple fold maps points on creases to creases and facet points to facet points.
If $\ell^*$ is parallel to $\ell$, the creases and facet points on $\ell^*$ will be mapped to a creases and interior points on a line $\ell'$ ($\ell'=\ell^*$ if $\ell$ is to the left of $\ell$ or $\ell'$ is the reflection of $\ell^*$ about $\ell$).
Else, $\ell^*$ is perpendicular to $\ell$ and $\ell^*$ will also contain a crease and a facet point of $(P',\Sigma')$ after the fold.
Because $|\Sigma'|<|\Sigma|$, we are done.
%
\end{proof}

We now describe the encoding of the input $(P,\Sigma)$.
Let $n_P$ be the number of vertices in $P$.
Let $L=\{\ell_1,\ldots,\ell_{n_L}\}$ be the set of lines in $\mathbb{P}$ that contain creases in $\Sigma$.
By Lemma~\ref{lem:inf-extention}, it is enough to store $L$ because $\Sigma$ can be inferred by $L$ and $P$ or else $(P,\Sigma)$ is a negative instance.
Our algorithm uses $(P,L)$ as the input and define its size as $n_P+n_L$.
First, we provide an algorithm for when $P$ is a rectangle.

\begin{lemma}
\label{lem:UnassignRect}
Let $(P,L)$ be an unassigned crease pattern where $P$ is a rectangle and lines in $L$ are axis-aligned.
Infinite all-layers simple foldability of $(P,L)$ can be determined in linear time.
\end{lemma}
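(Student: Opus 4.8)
The plan is to reduce the rectangle problem to two independent one‑dimensional problems, one per axis, and then invoke Theorem~\ref{thm:1d}. Since $P$ is a rectangle and every line in $L$ is axis‑aligned, the crease inferred on each line $\ell\in L$ is the full chord of $P$ on that line, so the hypothesis of Lemma~\ref{lem:inf-extention} is automatically met and no separate negative‑instance check is needed here. Thus $L$ consists of full horizontal and full vertical creases cutting all the way across $P$. I would define two unassigned 1D instances: $H$, obtained by projecting the horizontal lines of $L$ onto the vertical side of $P$ (creases at the distinct $y$‑coordinates, on a segment of length equal to the height of $P$); and $W$, obtained symmetrically by projecting the vertical lines of $L$ onto the horizontal side of $P$.

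The key structural claim is that folds along the two axes are independent. A horizontal fold has its axis on a horizontal crease and, by condition~\ref{cond:only-on-crease}, can occur nowhere else. Reflection about a horizontal line preserves $x$‑coordinates, so it maps every full vertical crease to a full vertical crease at the same $x$‑position, onto which an existing vertical crease already lies; hence condition~\ref{cond:creases-to-creases} is automatically satisfied for the vertical creases. Consequently a horizontal fold is legal in 2D exactly when the reflected horizontal creases land on horizontal creases, i.e.\ exactly when the corresponding fold is legal in the 1D instance $H$; moreover the fold leaves the set of $x$‑positions of vertical creases, and hence the instance $W$, unchanged (coincident vertical creases merge but none are created, destroyed, or displaced). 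The symmetric statements hold for vertical folds, and because the pattern is unassigned, conditions~\ref{cond:mountains-to-mountains} and~\ref{cond:consistent-assign} are vacuous throughout.

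From this I would derive the equivalence that $(P,L)$ is infinite all‑layers simple foldable if and only if both $H$ and $W$ are 1D simple foldable. For the forward direction, take any solving sequence and delete its vertical folds: what remains is a valid 1D fold sequence for $H$, since by the claim each surviving horizontal fold is 1D‑legal and the deleted vertical folds fix every horizontal‑crease $y$‑coordinate and only merge horizontal creases; as the final pattern is empty, this subsequence uses up all horizontal creases, so $H$ is 1D foldable, and symmetrically for $W$. For the converse, run the 1D solution for $H$ to completion---each of its folds is legal in 2D by the claim---collapsing $P$ to a single horizontal strip that still carries all vertical creases at their original $x$‑positions, namely an instance of $W$, and then run the 1D solution for $W$.

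Finally, constructing $H$ and $W$ from $(P,L)$ takes $O(n_P+n_L)$ time, and by Theorem~\ref{thm:1d} each of the two 1D instances is decided in deterministic linear time, yielding an overall linear‑time algorithm. I expect the main obstacle to be making the independence claim fully rigorous: specifically, verifying that condition~\ref{cond:creases-to-creases} for the creases orthogonal to the fold axis is always met precisely because every crease spans the whole rectangle, and confirming that folds in one direction neither create, destroy, nor shift creases of the other direction but only merge coincident ones, so that the two 1D subproblems genuinely decouple.
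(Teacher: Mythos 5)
Your proposal is correct and follows essentially the same route as the paper's proof: decompose $(P,L)$ into two independent unassigned 1D instances (horizontal lines of $L$ projected onto a vertical edge of $P$, vertical lines onto a horizontal edge), argue that a 2D fold is legal exactly when the corresponding 1D fold is and leaves the orthogonal instance unchanged, and invoke Theorem~\ref{thm:1d} for linear time. The only difference is that you spell out details the paper leaves implicit, such as why condition~\ref{cond:creases-to-creases} holds automatically for creases orthogonal to the fold axis because every crease spans the full chord of the rectangle, and how to interleave or concatenate the two 1D folding sequences in each direction of the equivalence.
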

\begin{proof}
We reduce the problem to two 1D instances, each solvable in $O(n')$ time where $n'$ is the number of creases in the 1D instance by Lemma~\ref{thm:1d}.
Let $L_h$ (resp., $L_v$) be the subset of $L$ containing horizontal (resp., vertical) lines.
Build a 1D instance with paper $P_v$ (resp., $P_h$) being a line segment congruent to the left (resp., bottom) edge of $P$.
Add a crease in $P_v$ (resp., $P_h$) at the intersections between the lines in $L_h$ (resp., $L_v$) and its correspondent edge.
By definition, any legal fold though a line in $L_h$ (resp., $L_v$) is also a legal fold in its corresponding crease in $P_v$ (resp., $P_h$), and its resulting crease pattern $(P',L')$ can be converted into 1D problems $P_v'$ and $P_h'$ such that $P_v'$ (resp., $P_h'$) is equal to the resulting crease patterns of the fold in $P_v$ (resp., $P_h$), and $P_h=P_h'$ (resp., $P_v=P_v'$).
Therefore, any folding sequence of the instance $(P,L)$ satisfying the problem can be converted to two folding sequences of the 1D problems and vice versa.
\end{proof}

Theorem 8 of \cite{SimpleFolds_JCDCGGG2016}, which says that instances of simple foldability with rectangular paper and orthogonal creases are equivalent in the all-layers model and infinite all-layers model, implies that the reduction in Lemma~\ref{lem:UnassignRect} also applies to the all-layers model.
Now, we prove our algorithmic result.




\begin{figure}[h!]
	\centering
	\includegraphics[width=.5\columnwidth]{./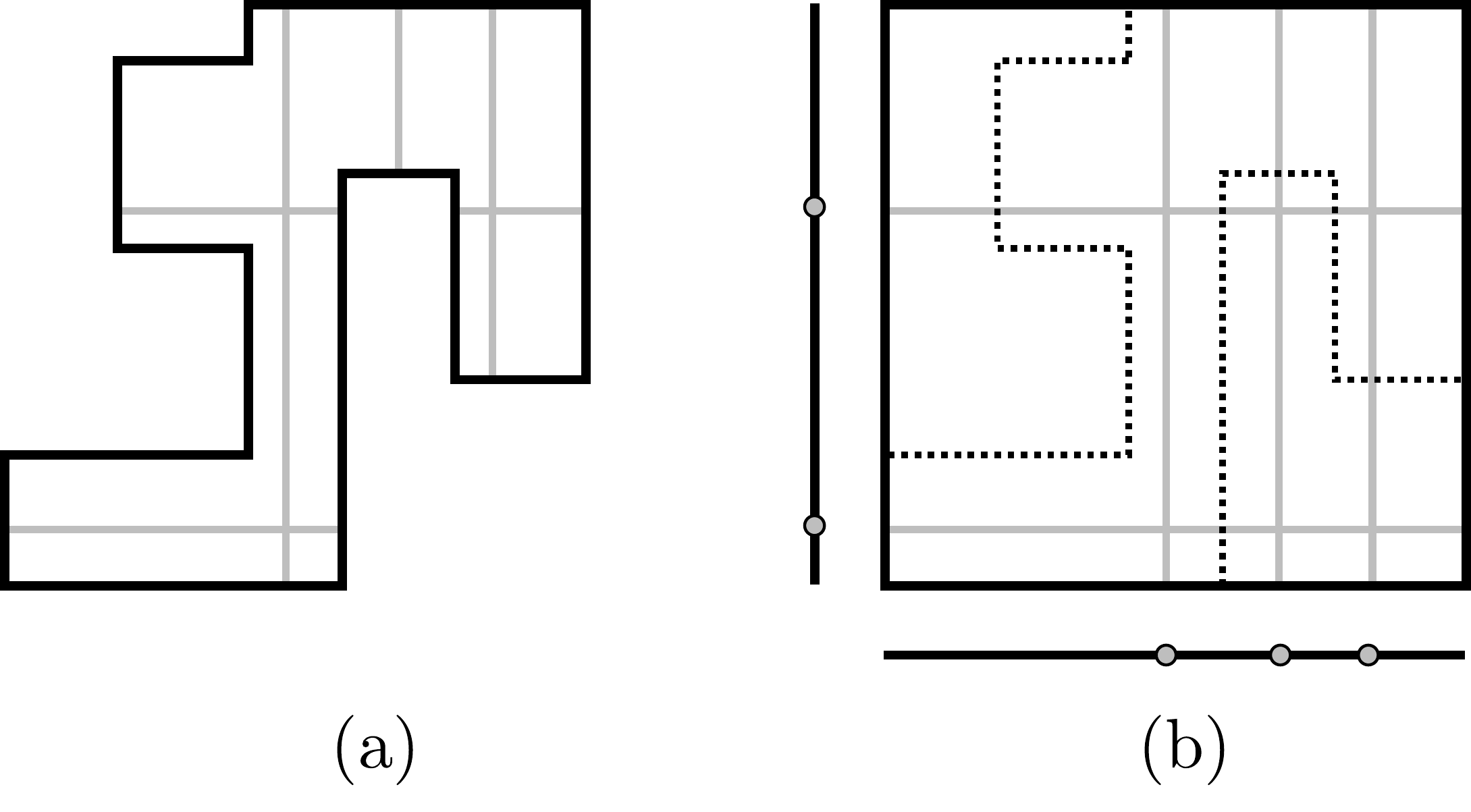}
	\caption{(a) An instance $(P,L)$ where $P$ is an orthogonal polygon and (b) its corresponding instance $(R,L)$ where $R$ is the bounding box of $P$ and the boundary of $P$ is shown with dotted line segments. The 1D instances corresponding to $(R,L)$ are also shown in (b).}
	\label{fig:2d}
\end{figure}

\begin{theorem} \label{thm:OrthOrth}
Determining whether an unassigned crease pattern on orthogonal paper with orthogonal creases has an infinite all-layers simple folding can be solved in linear time.
\end{theorem}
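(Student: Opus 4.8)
The plan is to reduce the general orthogonal case to the rectangular case of Lemma~\ref{lem:UnassignRect} by replacing $P$ with its axis-aligned bounding box $R$ while keeping the same set of crease lines $L$, exactly as depicted in Figure~\ref{fig:2d}. First I would verify the necessary condition of Lemma~\ref{lem:inf-extention} in linear time---checking that every line of $L$ meets $P$ only along creases---and reject otherwise; this lets me treat $\Sigma$ as precisely the intersections $P\cap(\bigcup L)$. I would then compute $R=\mathrm{bbox}(P)$ and form the rectangular instance $(R,L)$, which by Lemma~\ref{lem:UnassignRect} (and hence Theorem~\ref{thm:1d}) is decidable in linear time by splitting into one horizontal and one vertical $1$D instance. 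Since computing the bounding box, verifying the extension condition, and the reduction are all linear, the whole algorithm is linear provided the following equivalence holds.

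The key claim is that $(P,L)$ is infinite all-layers simple foldable if and only if $(R,L)$ is. I would prove this by showing that a single fold sequence $\sigma$ succeeds on $P$ exactly when it succeeds on $R$, maintaining two inductive invariants after $i$ folds: (i) $R_i$ is a rectangle equal to $\mathrm{bbox}(P_i)$ with $P_i\subseteq R_i$; and (ii) the available crease lines evolve identically for both, since by Definition~\ref{def:legal}\ref{cond:only-on-crease} a fold line must be a crease line and the reflect-and-merge evolution of the line set is independent of the paper shape. Invariant (i) is a short geometric computation: because $P_i$ touches all four sides of $R_i$, folding at a line preserves the $x$-extent and maps the top to the reflected level, so $\mathrm{bbox}(P_{i+1})=R_{i+1}$.

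The two directions then split as follows. The direction ``$(R,L)$ foldable $\Rightarrow (P,L)$ foldable'' is easy by monotonicity of Condition~\ref{cond:creases-to-creases}: since $P_i\subseteq R_i$, every facet of $P_i$ is a facet of $R_i$, so any fold legal on $R_i$ is legal on the sparser $P_i$, and the same $\sigma$ that empties $R$ empties $P$. The reverse direction is the crux and the main obstacle, because a fold legal on $P_i$ could in principle be illegal on $R_i$: a crease of $R_i$ lying in the gap $R_i\setminus P_i$ might reflect onto a facet of $R_i$ that $P_i$ simply lacks. I would rule this out using Lemma~\ref{lem:inf-extention} applied to the \emph{next} pattern $P_{i+1}$. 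Suppose a horizontal fold at $y=c$ sends a crease line $y=h$ to a level $m=2c-h$ that is interior to the common $y$-extent $[y_0,y_1]$ but is not itself a crease line. Then connectivity of $P_i$ forces paper to be present at level $m$; legality on $P_i$ via Condition~\ref{cond:creases-to-creases} forces that paper to be disjoint in $x$ from the reflected crease; hence $P_{i+1}$ carries both creased and uncreased points on the line $y=m$, contradicting Lemma~\ref{lem:inf-extention} for the foldable $P_{i+1}$. Therefore, in any successful folding of $P$, every reflected crease line lands either on another crease line or outside $[y_0,y_1]$, and by invariant (i) both cases keep the fold legal on $R_i$; the symmetric statement handles vertical folds. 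This establishes ``$(P,L)$ foldable $\Rightarrow (R,L)$ foldable,'' closing the equivalence and, through Lemma~\ref{lem:UnassignRect}, giving the linear-time algorithm.
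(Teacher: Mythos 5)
Your proposal is correct and takes essentially the same route as the paper: reduce to the bounding box $R$, solve $(R,L)$ via Lemma~\ref{lem:UnassignRect}, obtain the backward direction from $P_i\subseteq R_i$, and prove the forward direction by maintaining $R_i=\operatorname{bbox}(P_i)$ fold by fold and invoking Lemma~\ref{lem:inf-extention} to show every fold legal on $P_i$ remains legal on $R_i$. Your explicit contradiction argument (a reflected crease line landing inside the common extent but off the crease lines would leave $P_{i+1}$ violating Lemma~\ref{lem:inf-extention}) simply spells out the step the paper compresses into the sentence that both papers ``originate the same set of new supporting lines $L_{i+1}$.''
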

\begin{proof}
We reduce to the rectangular paper/orthogonal creases problem solved in Lemma~\ref{lem:UnassignRect}.
Let the instance with orthogonal paper and creases be represented by $(P,L)$ of size $n=n_1+n_2$ where $n_1$ and $n_2$ are the sizes of $P$ and $L$, respectively.
We can obtain the bounding box $R$ of $P$ in $O(n_1)$ time.
Output the instance $(R,L)$ that can be solved in $O(n_2)$ time by Lemma~\ref{lem:UnassignRect}.
We now show that $(R,L)$ admits an infinite all-layers simple folds sequence if and only if $(P,L)$ also does.

The backward implication is straightforward. 
If $(R,L)$ has a simple folding, then so must $(P,L)$. 
Notice that creases induced by $L$ in $P$ are a subset of the creases induced by $L$ in $R$.
Because $P$ is a subset of $R$, by definition, any fold made on $R$ will also be possible on $P$, so $P$ is infinite all-layers simple foldable.

Now we show that, if $((P_1,L_1),\ldots,(P_k,L_k))$ is an infinite all-layers simple folding for $(P,L)$, then $((R_1,L_1),\ldots,(R_k,L_k))$
is an infinite all-layers simple folding for $(R,L)$, where $R_i$ is the bounding box of $P_i$ for $i\in\{1,2,\ldots,k\}$.
We show that the axis of the simple fold from $(P_i,L_i)$ to $(P_{i+1},L_{i+1})$ is also the axis of a simple fold that transforms $(R_i,L_i)$ into $(R_{i+1},L_{i+1})$.
By definition, $P_i$ and $R_i$ share at least one each of bottom, left, top, and rightmost points, which we denote by $x_b$, $x_l$, $x_t$ and $x_r$ respectively.
Without loss of generality, take a vertical axis $\ell$ pointing up  whose supporting line is in $L_i$ that defines a legal fold.
By definition, the simple fold through $\ell$ reflects the right  portion of $P_i$ and $R_i$ to the left of $\ell$ producing $P_{i+1}$ and $R_i'$. 
Lemma~\ref{lem:inf-extention} guarantees that both originate the same set of new supporting lines $L_{i+1}$.
Additionally, in both cases, the rightmost point of the paper after the fold will be on $\ell$.
If the reflection of $x_r$ becomes the leftmost point in $P_{i+1}$ it will also become the leftmost point of $R_i'$.
Because $x_t$ and $x_b$ don't change y-coordinates, $R_i'=R_{i+1}$.
\end{proof}

\section{Partially Assigned Orthogonal Crease Patterns}
\label{sec:PartAssign}
We prove that, given a crease pattern with a subset of creases assigned mountain/valley, it is NP-complete to decide infinite all-layers simple foldability, even if the crease pattern is a square grid and the paper is rectangular.

\begin{lemma}
\label{lm:Hardness}
It is strongly NP-hard to decide infinite all-layers simple foldability of partially assigned crease patterns, even if the paper is rectangular and the creases form a square grid. 
\end{lemma}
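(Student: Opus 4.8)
The plan is to reduce from a purely combinatorial NP-hard problem that carries no large numbers, such as 3-SAT (or a layout-friendly variant like planar or positive 1-in-3 SAT), so that the constructed crease pattern uses only polynomially many rows and columns and all coordinates are polynomially bounded; strong NP-hardness then follows for free, since the reduction never introduces superpolynomial distances. Given a formula $\phi$ on variables $x_1,\ldots,x_p$ and clauses $C_1,\ldots,C_q$, I would build a rectangular sheet carrying a uniform square grid, partition its rows and columns into disjoint blocks, and dedicate blocks to variable gadgets, clause gadgets, and wiring that routes each literal from its variable block to the clauses that use it. The partial assignment pins a carefully chosen subset of grid creases to $M$ or $V$, leaving the rest free.

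The mechanism I would exploit is the coupling between horizontal and vertical folds through conditions~\ref{cond:mountains-to-mountains} and~\ref{cond:consistent-assign} of Definition~\ref{def:legal}. On a uniform grid every fold along a grid line maps creases onto creases, so conditions~\ref{cond:only-on-crease} and~\ref{cond:creases-to-creases} are automatically satisfied and the entire difficulty is concentrated in the assignment conditions. Reflecting a crease across a fold line negates its assignment ($\alpha(c')=-\alpha(c)$), so whenever a fold sequence forces two creases to overlap and merge, their assignments must agree after a sign flip whose parity is fixed by the surrounding fold order. For the unassigned case this coupling is inert---Theorem~\ref{thm:OrthOrth} shows the horizontal and vertical subproblems decouple---but once some creases are pinned, the induced constraints, together with the combinatorial freedom in choosing a fold order, are exactly what encodes satisfiability. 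Each variable gadget would be a block admitting precisely two legal local behaviors, corresponding to $x_i=\mathrm{true}$ and $x_i=\mathrm{false}$ (realized either by the direction of a free crease or by a binary choice of local fold order), and each clause gadget would place pinned creases so that, after the variable blocks fold, the clause creases are forced to overlap the wires carrying their literals, with condition~\ref{cond:consistent-assign} satisfiable at that overlap iff at least one literal satisfies the clause.

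For correctness I would argue the two directions separately. In the forward direction (satisfiable $\Rightarrow$ foldable), given a satisfying assignment I would exhibit an explicit canonical schedule---collapse each variable block in the direction dictated by its truth value, then fold the wiring, then fold the clause blocks---and check each fold against Definition~\ref{def:legal}, using Lemma~\ref{lem:inf-extention} and the full grid to confirm that every fold line is completely creased. In the reverse direction (foldable $\Rightarrow$ satisfiable), I would show that any legal folding sequence, restricted to a variable block, must commit to exactly one of the two truth classes (because mixing them would violate condition~\ref{cond:mountains-to-mountains} or~\ref{cond:consistent-assign} on a pinned crease), read off the resulting assignment, and then argue that legality at each clause block certifies that the clause is satisfied.

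The hard part will be this reverse direction. The all-layers model lets an adversary interleave horizontal and vertical folds in enormously many orders, so I must engineer the gadgets to be \emph{rigid}: every legal sequence, not merely the canonical one, must induce a well-defined and globally consistent truth assignment, foreclosing ``cheating'' schedules that collapse the paper while evading the intended constraints. The central technical work is therefore a set of rigidity lemmas showing (i) that the pinned creases force the fold order within each block into one of the two intended classes, and (ii) that the sign flips accumulated along the wiring transmit each literal's value faithfully to its clauses regardless of how the remaining folds are scheduled.
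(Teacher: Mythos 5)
Your high-level strategy coincides with the paper's: it too reduces from 3SAT to a partially assigned square grid on a rectangle, gets strong NP-hardness from polynomially bounded coordinates, and concentrates all the difficulty in conditions~\ref{cond:mountains-to-mountains} and~\ref{cond:consistent-assign}, with variable gadgets offering exactly two legal fold choices and clause gadgets checking that at least one literal is true. But what you have written is a plan, not a proof: the entire content of the lemma lives in the gadget construction and the rigidity arguments, and you explicitly defer both (``the central technical work is therefore a set of rigidity lemmas\ldots''). Nothing in the proposal specifies which creases are pinned, why only the two intended fold behaviors are legal in a variable block, or how a clause gadget becomes unfoldable when all three literals are false. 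Until those are supplied, neither direction of the equivalence is established.

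Moreover, one ingredient of your picture is actively suspect in this model: ``wiring that routes each literal from its variable block to the clauses that use it.'' In the \emph{infinite} all-layers model every fold is a full line across the entire sheet folding all layers, so gadgets cannot be exercised locally or independently, and circuit-style wire routing has no obvious analogue. The paper resolves exactly this difficulty by a different mechanism: the partial assignment is engineered so that at every stage the \emph{only} legal folds form a forced global sequence per variable (fold through $t_i$ or $f_i$, then $v_i'$, then $t_i'$ or $f_i'$, then $v_i$), and each such sequence translates all clause-tracking points $p_{j,i}$ onto $p_{j,i+1}$, depositing a valley-assigned crease adjacent to $p_{j,i+1}$ precisely when the literal of $x_i$ in clause $c_j$ evaluates to \textsc{false}; the final check is that a grid vertex whose four incident creases are all assigned valley admits no legal fold through any of them, by condition~\ref{cond:consistent-assign}. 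Your rigidity lemmas (i) and (ii) would have to amount to exactly this kind of forced-sequence analysis over \emph{all} interleavings of horizontal and vertical folds, and the forward direction additionally needs an explicit verified folding schedule (the paper gives case-by-case collapse sequences for satisfied clause boxes). So: right approach, but the proof's essential body---explicit coordinates, pinned-crease layout, the forced-sequence argument, and the clause-vertex obstruction---is missing.
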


\begin{figure}[h!]
	\centering
	\includegraphics[width=.79\columnwidth]{./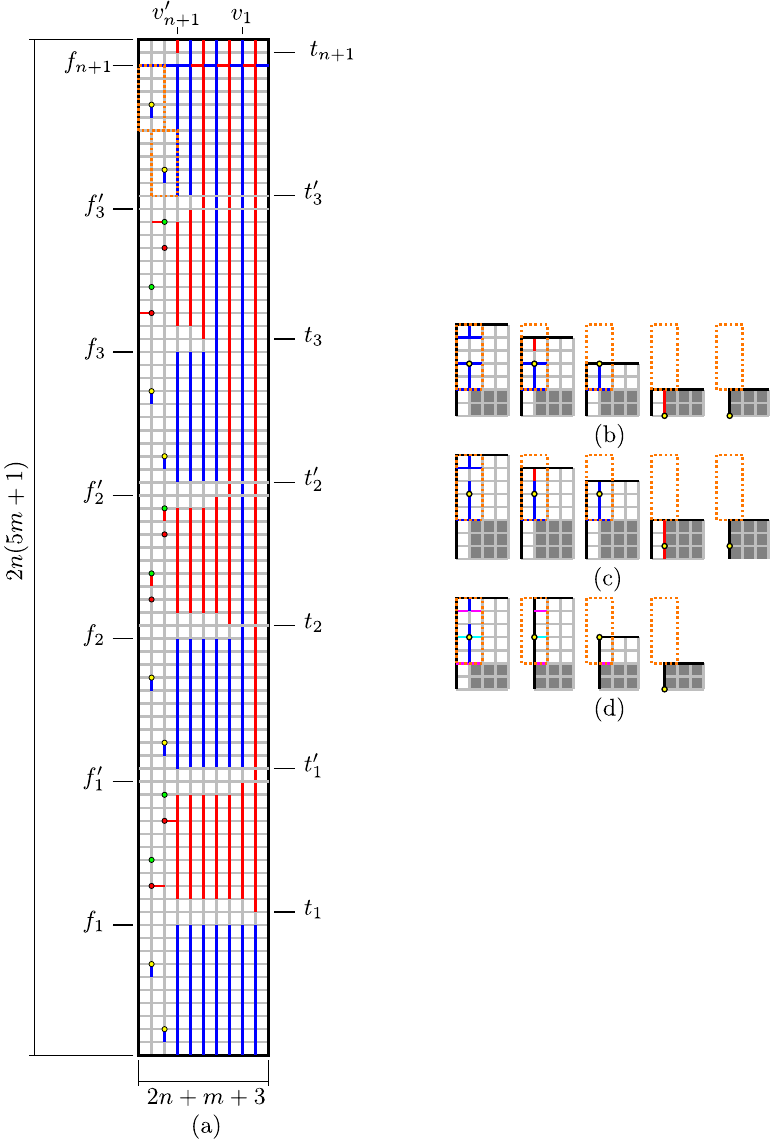}
	\caption{(a) Reduction from 3SAT to partially assigned simple foldability under the infinite all-layers model. The instance shown corresponds to the boolean formula $(x_1\vee \overline{x_2} \vee x_3)\wedge(x_1\vee \overline{x_2} \vee \overline{x_3})$.
		A satisfied clause gadget can be folded using one of the folding sequences in (b)--(d).}
	\label{fig:reduction}
\end{figure}

\begin{proof}
We provide a reduction from 3SAT.
An instance of 3SAT is given by a set of $n$ variables $\{x_1,\ldots,x_n\}$ and a boolean formula in conjunctive normal form with $m$ clauses of the form $(l_1\vee l_2\vee l_3)$ where $l_k$, $k\in\{1,2,3\}$, is the boolean value of a variable $x_i$, $i\in\{1,\ldots,n\}$, or its negated value denoted by $\overline{x_i}$.
The 3SAT problem asks whether there exists an assignment from the variables to $\{\textsc{true},\textsc{false}\}$ such that the boolean formula evaluates to \textsc{true}.
Given the 3SAT instance described above, we construct a crease pattern on a $2n(5m+1)\times (2n+m+3)$ rectangular paper as shown in Figure~\ref{fig:reduction} (a).
Informally, a variable $x_i$ is represented by the section of paper between rows $f_i$ and $t_i'$ inclusive. The yellow dots represent literals in a clause and folding a yellow dot onto a green or red dot represents assignment as positive or negative respectively. Yellow dots (literals) sharing a vertical line are in the same clause.

An overview of the proof follows: because of~\ref{cond:consistent-assign}, the partial assignment forces any sequence of legal simple folds to choose between folding through $t_1$ or $f_1$, bringing them on top of a green or red dot, respectively, which encodes the boolean assignment of the variable $x_1$.
After a vertical fold on the right edge of the paper, the construction forces these yellow dots to coincide with the yellow dot right above it by folding through $t_1'$ or $f_1'$, which adds a valley assignment to a crease next to a yellow dot of a corresponding clause if its literal of $x_1$ evaluates to \textsc{false}.
We apply induction on the resulting crease pattern and, after choosing the assignment of all variables, the topmost set of yellow points will have at least one crease that is not assigned valley if and only if the SAT instance has a positive solution.
If a vertex has all four incident valley-assigned creases, there is no legal simple fold that folds through any of its creases.

Now, the complete proof: consider the bottom left corner as the origin.
In the following, we use the indices $i\in\{1,\ldots,n\}$ to refer to variables and $j\in\{1,\ldots,m\}$ to refer to clauses.
We first define the position of some points that will serve as reference to the construction.
For each clause $c_j$, define the points $p_{j,i} = (j+1, 10 m i -5 j - 5 m +  2 i)$ and a point $p_{j,(n+1)}$ using the same formula, that represent $n+1$ copies of the clause. 
In Figure~\ref{fig:reduction} (a), such points are represented as yellow dots.
Assign the crease right below each $p_{j,i}$ as valley.
Also define $r_{j,i}=(j+1,5 j - 10 m + i (10 m + 2) + 6)$ and $g_{j,i}=r_{j,i}+(0,2)$, represented in Figure~\ref{fig:reduction} (a) as red and green dots respectively.
If the first (resp., second, resp., third) literal that $c_j$ contain is $x_i$, assign the crease to the right (resp., bottom, resp., left) of $r_{j,i}$ as mountain.
If the literal is negated, i.e., $\overline{x_i}$, assign the crease to the right (resp., bottom, resp., left) of $g_{j,k}$ as mountain.

We now define some reference horizontal lines.
Let $f_i$ be the horizontal line appearing at height ${5m+2(i-1)(5m+1)}$, and $t_i, f_i'$ and $t_i'$ the horizontal lines $1$, $5m+1$ and $5m+2$ units above  $f_i$ respectively.
Define $f_{n+1}$ and $t_{n+1}$ analogously. 
For each variable $x_i$, we define the crease assignment of two vertical lines $v_i$ and $v_i'$, which are $m + 2 n - 2 i + 2$ and $m + 2 n - 2 i + 3$ units to the right of the origin respectively.
Also define $v_{n+1}'$, which is $m+1$ units to the right of the origin.
Assign $5m$ creases bellow $f_k$ as valley for $k\le i$.
Assign $5m-2$ creases bellow $t_k'$ as mountain for $k< i$, skipping the first one.
Assign mountain to $5m-1$ creases below $f_i'$ in $v_i$ and to all the creases above $t_i$ in $v_i'$.
Assign valley to all the creases above $t_i'$ in $v_i$.
Finally, we describe the assignment of creases in $f_{n+1}$.
Assign the $m+2$ leftmost creases as valley and alternate between mountain and valley until the end of the line.
All other creases in the construction that do not have any assignment remain unassigned.

We now show that, if the 3SAT instance has a positive solution, so does the constructed instance of simple foldability.
In order from 1 to $n$, if $x_i$ is assigned \textsc{true} (resp., \textsc{false}), valley fold through $t_i$ (resp., $f_i$), bringing the bottom part up and aligning all $p_{j,i}$ with $g_{j,i}$ (resp., $r_{j,i}$);
mountain fold though $v_i'$, bringing the right strip of paper to the left;
valley fold through $t_i'$ (resp., $f_i'$), aligning all $p_{j,i}$ with $p_{j,i+1}$;
valley fold through $v_i$.
If $x_i$ appears as the first (resp., second, resp., third) literal of $c_j$ which  evaluates to \textsc{false}, then these folding sequence brings a valley-assigned crease to the right (resp., top, resp., left) of $p_{j,i+1}$.
After all $n$ steps, valley fold through $f_{n+1}$, bringing the top strip of paper down, and then through $v_{n+1'}$.
This eliminates all assigned creases apart from the ones adjacent to reference points.
Notice that, for a clause $c_j$, the corresponding assigned creases (which are adjacent to a reference point with index $j$) are contained in a $2\times 5$ bounding box containing $p_{j,n+1}$ in its $(1,2)$ relative coordinate (shown as a dotted orange box in Figure~\ref{fig:reduction}).
All boxes containing $p_{k,n+1}, k<j$ are below and to the right of the point $(0,-2)$ relative to $p_{j,n+1}$ (darkened region shown in Figure~\ref{fig:reduction} (b)--(d)).
We start  by assuming that all folds relative to $c_k$, $k<j$ are already folded and the top left corner of the paper coincides with a corner of the box containing $p_{j,n+1}$.
If the second literal of $c_j$ evaluates to \textsc{true}, the assigned creases in the bounding box of $p_{j,n+1}$ are a subset of the leftmost crease pattern in Figure~\ref{fig:reduction}(b). If both the first and third literal evaluate to \textsc{true}, then the assigned creases in the bounding box of $p_{j,n+1}$ are a subset of the pattern in Figure~\ref{fig:reduction}(c).
We can fold through all creases in such box using the folding sequence in Figure~\ref{fig:reduction}(b) (resp., (c)), which does not introduce any assignment in boxes of $p_{l,n+1}$, $l>j$.
Else, the assigned creases are a subset of the leftmost crease pattern in Figure~\ref{fig:reduction}(d), where exactly one of the \textcolor{cyan}{cyan} creases and one of the \textcolor{magenta}{magenta} creases are assigned valley.
By following the folding sequence in Figure~\ref{fig:reduction}(d), we eliminate all assigned creases in the box of $p_{j,n+1}$ while not introducing any assignment in boxes of $p_{l,n+1}$, $l>j$.
By induction, we can continue folding until there are no assigned creases and the resulting crease pattern has a positive solution using the algorithm for unassigned crease patterns.


Now, we show that if the produced crease pattern is foldable, the 3SAT instance has a positive solution.
By~\ref{cond:consistent-assign}, no vertex can have 4 adjacent creases with the same assignment or else there is no legal fold through such vertex.
Initially, the only possible legal folds are through $f_1$ or $t_1$ due to (4).
First, assume that the solution folds through $f_1$, bringing $p_{j,1}$ onto $r_{j,1}$.
That assigns some creases between $f_1$ and $t_1$ making it impossible to fold through $t_1$. 
The only possible simple fold now is through $v_1'$.
Then, folding through $t_1'$ is illegal since it would bring a mountain-assigned crease in $v_2'$ right above $f_1$ onto a mountain-assigned crease right above $t_2$.
Hence, the only possible fold is through $f_1'$, followed by a valley fold through $v_1$.
The resulting folded state has a height-1 strip of paper in the bottom that does not contain any assigned crease. 
If this is folded on top of any other part of the construction, it will not create any other assignment. 
At this state, the solution can either fold through the horizontal line right above $t_1'$ and then fold through $f_2$ or $t_2$, or fold directly through $f_2$ or $t_2$, which are the only available legal folds.
In both scenarios, the resulting assigned pattern and the boundary of the paper are the same.
Therefore, after folding through $v_1$, we obtain a smaller version of the reduction with variables $\{x_2,\ldots,x_n\}$ and some extra assigned creases in the vicinity of $p_{j,2}$ if $c_j$ contained a literal of $x_1$.
In particular, if said literal is positive, there exist one extra crease adjacent to  $p_{j,2}$ that is assigned valley.
Now, assume that the solution folds through $t_1$. 
The next folds must be through $v_1'$ followed by $t_1'$, and $v_1$ or else an assigned crease will be mapped onto a crease of same assignment contradicting~\ref{cond:consistent-assign}.
We again obtain a smaller version of the reduction and if $c_j$ contained the literal $\overline{x_1}$, there is one extra crease adjacent to  $p_{j,2}$ that is assigned valley.
By induction, the solution must fold in this manner (bringing $p_{j,i}$ onto $p_{j,i+1}$) until all of the yellow marked points are on top of a $p_{j,n+1}$.
Because the crease pattern in foldable, there must exist at least one crease adjacent to $p_{j,n+1}$ that is not assigned valley.
That corresponds to an assignment in which at least one literal in each clause evaluates to \textsc{true}.
\end{proof}

\begin{theorem}
It is strongly NP-complete to decide simple foldability under the infinite all-layers model of a partially assigned crease pattern, even if the paper is rectangular and the creases form a square grid. 
\end{theorem}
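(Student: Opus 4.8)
The plan is to combine Lemma~\ref{lm:Hardness}, which already establishes strong NP-hardness, with a proof of membership in NP. Since the reduction in Lemma~\ref{lm:Hardness} produces instances whose paper dimensions and crease coordinates are all polynomially bounded integers (the grid has size polynomial in $n$ and $m$), establishing membership in NP immediately upgrades the hardness to strong NP-completeness; the remaining work is therefore to exhibit a polynomial-time-verifiable certificate.

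For membership, I would take the sequence of fold lines $\ell_2,\ldots,\ell_m$ defining the folding sequence $S=((P_1,\Sigma_1),\ldots,(P_m,\Sigma_m))$ as the certificate. The first key step is to bound the length of this sequence. By the definition of a legal infinite all-layers simple fold, each fold strictly decreases the number of creases, since the creases lying on $\ell$ are ``used up'' and any interior-intersecting pairs are merged, so that $|\Sigma_{i}|<|\Sigma_{i-1}|$. Hence the number of folds is at most $|\Sigma|$, which is polynomial in the input size for a square grid on rectangular paper, so the certificate has polynomial length.

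The second step is to verify the certificate in polynomial time. Given the current crease pattern and a candidate fold line $\ell$, I would check conditions \ref{cond:only-on-crease}--\ref{cond:consistent-assign} of Definition~\ref{def:legal} directly: conditions \ref{cond:only-on-crease} and \ref{cond:creases-to-creases} reduce to verifying that $\ell\cap P$ and the reflected creases $\Sigma_r'$ (resp.\ $\Sigma_l$) meet only creases or the boundary, which amounts to comparing the polynomially many geometric features; conditions \ref{cond:mountains-to-mountains} and \ref{cond:consistent-assign} reduce to comparing assignments among the polynomially many overlapping creases, where the transitive closure of the overlap relation is computed by a standard union--find. Applying the fold---reflecting $P_r,\Sigma_r$ about $\ell$, merging overlapping creases, and propagating assignments as in the construction following Definition~\ref{def:legal}---also takes polynomial time, and checking $\Sigma_m=\emptyset$ at the end is immediate.

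The main obstacle, though a mild one, is ensuring that the numbers describing the intermediate crease patterns remain polynomially bounded so that verification stays polynomial-time even in the strong sense. This follows because every fold line is the supporting line of an existing crease, so no new coordinates are ever introduced beyond reflections of existing ones; all coordinates therefore stay within the original bounding box and on the original grid, keeping their bit-length polynomial throughout the entire sequence. With membership in NP thus established and strong NP-hardness supplied by Lemma~\ref{lm:Hardness}, strong NP-completeness follows.
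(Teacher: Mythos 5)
Your proposal is correct and follows essentially the same route as the paper's own proof: membership in NP via the observation that each legal fold strictly decreases the number of creases (bounding the certificate length by $|\Sigma|$) together with a polynomial-time legality check for each fold, combined with the strong NP-hardness from Lemma~\ref{lm:Hardness}. Your additional remarks on coordinate bit-lengths and the polynomially bounded numbers in the reduction are sound elaborations of details the paper leaves implicit.
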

\begin{proof}
Because unfolding is forbidden, a simple fold reduces the number of creases by at least one.
Therefore, a sequence of simple folds that folds through all creases of a crease pattern with $n$ creases can have $O(n)$ folds.
We can check in $O(n)$ time whether a given axis define a legal simple fold.
The rest of the proof follows from Lemma~\ref{lm:Hardness}.
\end{proof}

\section{Open Problems}
\label{sec:Open}

Lemma 4.1 of~\cite{arkin04}, used in Theorem~\ref{thm:1d}, only applies to assigned or unassigned 1D crease patterns.
Hence, we leave open the algorithmic complexity of deciding all-layers simple foldability of partially assigned 1D crease patterns.
Indeed, for this class of input, the problem has not been studied in the other simple foldability models introduced in~\cite{arkin04}.
We conjecture that the algorithm presented in Section~\ref{sec:1D} also finds a solution if one exists.
The main remaining open problem is whether infinite all-layers simple
foldability can be solved in polynomial time, or is NP-hard, in fully assigned
crease patterns~\cite{SimpleFolds_JCDCGGG2016}.
This problem remains open in particular when restricted to orthogonal crease patterns.
We also leave open the complexity of infinite all-layers simple foldability in unassigned nonorthogonal crease patterns, for example, axis-aligned rectangular pieces of paper with unassigned creases at multiples of $45^\circ$, or even general unassigned crease patterns on general polygonal pieces of paper.

\section*{Acknowledgments}

This work was initiated during an open problem session associated with MIT
class 6.849 (Geometric Folding Algorithms) taught by E. Demaine and J. Ku in
Spring 2017.  We thank the other participants of that class for providing a
stimulating research environment.

Supported in part by NSF grants EFRI-1240383, CCF-1138967 and CCF-1422311, and
the Science without Borders scholarship program.


\small
\bibliographystyle{alpha}
\bibliography{main}

\end{document}